\newcommand*{\etc}{%
    \@ifnextchar{.}%
        {etc}%
        {etc.\@\xspace}%
}
\newcommand*{\etal}{%
    \@ifnextchar{.}%
        {et al}%
        {et al.\@\xspace}%
}
\numberwithin{equation}{section}
\newtheorem{theorem}{Theorem}[section]
\crefname{theorem}{Theorem}{Theorems}
\newtheorem{lemma}[theorem]{Lemma}
\crefname{lemma}{Lemma}{Lemmas}
\crefname{equation}{equation}{equations}
\newtheorem{observation}[theorem]{Observation}
\newtheorem*{observation*}{Observation}
\DeclarePairedDelimiter\abs{\lvert}{\rvert}
\DeclarePairedDelimiter\norm{\lVert}{\rVert}
\newcommand{\N}{\mathbb{N}}
\newcommand{\Q}{\mathbb{Q}}
\newcommand{\R}{\mathbb{R}}
\newcommand{\Z}{\mathbb{Z}}
\newcommand{\sstar}{\Sigma^*}
\newcommand{\sinf}{\Sigma^{\omega}}
\newcommand{\sfi}{\Sigma^{\leq\omega}}
\newcommand{\tri}{\Delta}
\newcommand{\trio}{\tri(\Omega)}
\newcommand{\tripo}{\tri^+(\Omega)}
\newcommand{\dab}{D(\alpha||\beta)}
\newcommand{\shpbox}{\#_{\boxempty}} 
\newcommand{\pisn}{\uppi_{S,n}}        
\newcommand{\pisnl}{\uppi_{S,n}^{(\ell)}} 
\newcommand{\pisnkl}{\uppi_{S,n_k}^{(\ell)}} 
\newcommand{\pitnl}{\uppi_{T,n}^{(\ell)}} 
\newcommand*{\restrict}{\upharpoonright}
\renewcommand{\emph}[1]{\textit{#1}}
\DeclareMathOperator{\divg}{div}
\DeclareMathOperator{\Divg}{Div}
\newcommand*{\divsal}{\divg_{\ell}(S||\alpha)}
\newcommand*{\Divsal}{\Divg_{\ell}(S||\alpha)}
\newcommand*{\divst}{\divg_{\ell}(S||T)}
\newcommand*{\Divst}{\Divg_{\ell}(S||T)}
\newcommand*{\divpl}[3]{\divg_{#3}(#1||#2)}
\newcommand*{\Divpl}[3]{\Divg_{#3}(#1||#2)}
\newcommand*{\divp}[2]{\divg(#1||#2)}
\newcommand*{\Divp}[2]{\Divg(#1||#2)}
\newcommand*{\anorm}{{$\alpha$-normal} }
\newcommand*{\prefix}{\sqsubseteq}
\newcommand*{\sprefix}{\sqsubset} 
\newcommand*{\dga}{d_{G,\alpha}}
\newcommand*{\al}{\alpha^{(\ell)}}
\DeclareMathOperator{\risk}{risk}
\DeclareMathOperator{\Risk}{Risk}
\DeclareMathOperator{\str}{str}
\DeclareMathOperator{\fs}{FS}
\DeclareMathOperator{\Dim}{Dim}
\newcommand*{\shpgw}{\#_{G,w}}
\newtheorem*{definition*}{Definition}
\title{
    Asymptotic Divergences and Strong Dichotomy
    \thanks{This research was supported in part by National Science Foundation Grants 1247051, 1545028, and 1900716. }
    \thanks{Part of this work was  supported by a grant from the Spanish Ministry of Science, Innovation and Universities (TIN2016-80347-R) and was partly done during a research stay at the Iowa State University supported by National Science Foundation Research Grant 1545028.}
}
\author[1,3]{Xiang Huang}
\author[1]{Jack H. Lutz}
\author[2]{Elvira Mayordomo}
\author[1]{Donald M. Stull}
\affil[1]{Iowa State University, Ames, IA 50011 USA\\
    \texttt{\{huangx, lutz, dstull\}@iastate.edu}}
\affil[2]{Departamento de Inform\'atica e Ingenier\'{\i}a de Sistemas, Instituto de Investigaci\'on en Ingenier\'{\i}a
de Arag\'on, Universidad de Zaragoza, 50018 Zaragoza, Spain\\
    \texttt{elvira@unizar.es}}
\affil[3]{Le Moyne College, Syracuse, NY 13214, USA}
\date{}
\begin{document}

\maketitle
\begin{abstract}
The \emph{Schnorr-Stimm dichotomy theorem} \cite{SchSti72} concerns finite-state gamblers that bet on infinite sequences of symbols taken from a finite alphabet $\Sigma$. The theorem asserts that, for any such sequence $S$, the following two things are true. 

    (1) If $S$ is not normal in the sense of Borel (meaning that every two strings of equal length appear with equal asymptotic frequency in $S$), then there is a finite-state gambler that wins money at an infinitely-often exponential rate betting on $S$.

    (2) If $S$ is normal, then any finite-state gambler betting on $S$ loses money at an exponential rate betting on $S$.

In this paper we use the Kullback-Leibler divergence to formulate the \textit{lower asymptotic divergence} $\divg(S||\alpha)$ of a probability measure $\alpha$ on $\Sigma$ from a sequence $S$ over $\Sigma$ and the \textit{upper asymptotic divergence} $\Divg(S||\alpha)$ of $\alpha$ from $S$ in such a way that a sequence $S$ is $\alpha$-normal (meaning that every string $w$ has asymptotic frequency $\alpha(w)$ in $S$) if and only if $\Divg(S||\alpha)=0$.  We also use the Kullback-Leibler divergence to quantify the \textit{total risk} $\Risk_G(w)$ that a finite-state gambler $G$ takes when betting along a prefix $w$ of $S$.

Our main theorem is a \textit{strong dichotomy theorem} that uses the above notions to \textit{quantify} the exponential rates of winning and losing on the two sides of the Schnorr-Stimm dichotomy theorem (with the latter routinely extended from normality to $\alpha$-normality).  Modulo asymptotic caveats in the paper, our strong dichotomy theorem says that the following two things hold for prefixes $w$ of $S$.

    (1\textprime) The infinitely-often exponential rate of winning in 1 is $2^{\Divg(S||\alpha)|w|}$.

    (2\textprime) The exponential rate of loss in 2 is $2^{-\Risk_G(w)}$. 

We also use (1{\textprime}) to show that $1-\Divg(S||\alpha)/c$, where $c= \log(1/ \min_{a\in\Sigma}\alpha(a))$, is an upper bound on the finite-state $\alpha$-dimension of $S$ and prove the dual fact that $1-\divg(S||\alpha)/c$ is an upper bound on the finite-state strong $\alpha$-dimension of $S$.
\end{abstract} 

\section{Introduction}\label{sec:introduction}
An infinite sequence $S$ over a finite alphabet is \textit{normal} in the 1909 sense of Borel \cite{emile1909probabilites} if every two strings of equal length appear with equal asymptotic frequency in $S$. Borel normality played a central role in the origins of measure-theoretic probability theory \cite{billingsley1995probability} and is intuitively regarded as a weak notion of randomness. For a masterful discussion of this intuition, see section 3.5 of \cite{knuth2014art}, where Knuth calls normal sequences ``$\infty$-distributed sequences.''

The theory of computing was used to make this intuition precise. This took place in three steps in the 1960s and 1970s. First, Martin-L\"of \cite{martin1966definition} used constructive measure theory to give the first successful formulation of the randomness of individual infinite binary sequences. Second, Schnorr \cite{DBLP:journals/mst/Schnorr71}  gave an equivalent, and more flexible, formulation of Martin-L\"of's notion in terms of gambling strategies called \textit{martingales}. In this formulation, an infinite binary sequences $S$ is random if no lower semicomputable martingale can make unbounded money betting on the successive bits of $S$. Third, Schnorr and Stimm \cite{SchSti72} proved that an infinite binary sequence $S$ is normal if and only if no martingale that is computed by a finite-state automaton can make unbounded money betting on the successive bits of $S$. That is, \textit{normality is finite-state randomness.}

This equivalence was a breakthrough that has already had many consequences (discussed later in this introduction), but the Schnorr-Stimm result said more. It is a \textit{dichotomy theorem}  asserting that, for any infinite binary sequence $S$, the following two things are true.
\begin{enumerate}
	\item If $S$ is not normal, then there is a finite-state gambler that makes money at an infinitely-often exponential rate when betting on $S$.
	\item If $S$ is normal, then every finite-state gambler that bets infinitely many times on $S$ loses money at an exponential rate. 
\end{enumerate}

The main contribution of this paper is to \emph{quantify} the exponential rates of winning and losing on the two sides (1 and 2 above) of the Schnorr-Stimm dichotomy. 

To describe our main theorem in some detail, let $\Sigma$ be a finite alphabet. It is routine to extend the above notion of normality to an arbitrary probability measure $\alpha$ on $\Sigma$. Specifically, an infinite sequence $S$ over $\Sigma$ is $\alpha$-\textit{normal} if every finite string $w$ over $\Sigma$ appears with asymptotic frequency $\alpha^{|w|}(w)$ in $S$, where $\alpha^{\ell}$ is the natural (product) extension of $\alpha$ to strings of length $\ell$. Schnorr and Stimm \cite{SchSti72} correctly noted that their dichotomy theorem extends to $\alpha$-normal sequences in a straightforward manner, and it is this extension whose exponential rates we quantify here. 

The quantitative tool that drives our approach is the Kullback-Leibler divergence \cite{kullback1951information}, also known as the relative entropy \cite{oCovTho06}. If $\alpha$ and $\beta$ are probability measures on $\Sigma$, then the \textit{Kullback-Leibler divergence of $\beta$ from $\alpha$} is 
\[
 D(\alpha||\beta)=E_{\alpha}\log\frac{\alpha}{\beta},   	
\]   
i.e., the expectation with respect to $\alpha$ of the random variable $\log\frac{\alpha}{\beta}:\Sigma \rightarrow \R\cup\{\infty\}$, where the logarithm is base-2. Although the Kullback-Leibler divergence is not a metric on the space of probability measures on $\Sigma$, it does quantify ``how different'' $\beta$ is from $\alpha$, and it has the crucial property that $D(\alpha||\beta)\geq 0$, with equality if and only if $\alpha=\beta$.

Here we use the empirical frequencies of symbols in $S$ to define the \textit{asymptotic lower divergence} $\divg(S||\alpha)$ of $\alpha$ from $S$ and the \textit{asymptotic upper divergence} $\Divg(S||\alpha)$ of $\alpha$ from $S$ in a natural way, so that $S$ is $\alpha$-normal if and only if $\Divg(S||\alpha)=0$.

The first part of our \textit{strong dichotomy theorem} says that the infinitely-often exponential rate that can be achieved in $1$ above is essentially at least $2^{\Divg(S||\alpha)\abs{w}}$, where $w$ is the prefix of $S$ on which the finite-state gambler has bet so far. More precisely, it says the following. 

\begin{enumerate}
 	\item[1\textprime.] If $S$ is not $\alpha$-normal, then, for every $\gamma<1$, there is a finite-state gambler $G$ such that, when $G$ bets on $S$ with payoffs according to $\alpha$, there are infinitely many prefixes $w$ of $S$ after which $G$'s capital exceeds $2^{\gamma \Divg(S||\alpha)\abs{w}}$. 
 \end{enumerate} 

 The second part of our strong dichotomy theorem, like the second part of the Schnorr-Stimm dichotomy theorem, is complicated by the fact that a finite-state gambler may, in some states, decline to bet. In this case, its capital after a bet is the same as it was before the bet, regardless of what symbol actually appears in $S$. Once again, however, it is the Kullback-Leibler divergence that clarifies the situation. As explained in section 3 below, in any particular state $q$, a finite-state gambler's betting strategy is a probability measure $B(q)$ on $\Sigma$. If $B(q)=\alpha$, then the gambler does not bet in state $q$. We thus define the \textit{risk} that the gambler $G$ takes in state $q$ to be 
 \[
  	\risk_G(q)=D(\alpha|| B(q)),
  \] 
  i.e., the divergence of $B(q)$ from not betting. We then define the \textit{total risk} that the gambler takes along a prefix $w$ of the sequence $S$ on which it is betting to be the sum $\Risk_{G}(w)$ of the risks $\risk_G(q)$ in the states that $G$ traverses along $w$. The second part of our strong dichotomy theorem says that, if $S$ is $\alpha$-normal and $G$ is a finite-state gambler betting on $S$, then after each prefix $w$ of $S$, the capital of $G$ on prefixes $w$ of $S$ is essentially bounded above by $2^{-\Risk_{G}(w)}$. In some sense, then, $G$ loses all that it risks. More precisely, the second part of our strong dichotomy says the following.

  \begin{enumerate}
  	\item[2\textprime.]  If $S$ is $\alpha$-normal, then, for every finite-state gambler $G$ and every $\gamma<1$, after all but finitely many prefixes $w$ of $S$, the gambler $G$'s capital is less than $2^{-\gamma \Risk_{G}(w)}$. 
  \end{enumerate}

  A routine ergodic argument, already present in \cite{SchSti72}, shows that, if a finite-state gambler $G$ bets on an \anorm sequence $S$, then every state of $G$ that occurs infinitely often along $S$ occurs with positive frequency along $S$. Hence 2 above follows from 2{\textprime} above.

  Our strong dichotomy theorem has implications for finite-state dimensions. For each probability measure $\alpha$ on $\Sigma$ and each sequence $S$ over $\Sigma$, the \textit{finite-state $\alpha$-dimension} $\dim_{\fs}^{\alpha}(S)$ and the \textit{finite-state strong $\alpha$-dimension} $\Dim_{\fs}^{\alpha}(S)$ (defined in section 4 below) are finite-state versions of Billingsley dimension \cite{billingsley1960hausdorff,cajar1981billingsley} introduced in \cite{lutz11}. When $\alpha$ is the uniform probability measure on $\Sigma$, these are the finite dimension $\dim_{\fs}(S)$, introduced in \cite{DLLM04} as a finite-state version of Hausdorff dimension \cite{haus19,falc14}, and the finite-state strong dimension $\Dim_{\fs}(S)$, introduced in \cite{AHLM07} as a finite-state version of packing dimension \cite{tric1982,sullivan1984entropy,falc14}. Intuitively, $dim_{\fs}^{\alpha}(S)$ and $\Dim_{\fs}^{\alpha}(S)$ measure the lower and upper asymptotic $\alpha$-densities of the finite-state information in $S$. 

  Here we use part 1 of our strong dichotomy theorem to prove that, for every positive probability measure $\alpha$ on $\Sigma$ and every sequence $S$ over $\Sigma$,
  \[
  	\dim_{\fs}^{\alpha}(S)\leq 1-\Divg(S||\alpha)/c,
  \]
  where $c= \log(1/ \min_{a\in\Sigma}\alpha(a))$. We also establish the dual result that, for all such $\alpha$ and $S$,
  \[
  	\Dim_{\fs}^{\alpha}(S)\leq 1-\divg(S||\alpha)/c.
  \]

Research on normal sequences and normal numbers (real numbers whose base-$b$ expansions are normal sequences for various choices of $b$) 
has grown rapidly in recent years.  Part of this is due to the fact that Agafonov \cite{agafonov1968normal}\ and Schnorr and Stimm \cite{SchSti72}\
connected the theory of normal numbers so directly to the theory of computing.  Further work along these lines has been continued in 
\cite{KamWei75,MerRei06,BecHei13,shen2017automatic}.  
After the discovery of algorithmic dimensions in the present century \cite{lutz03a,jLutz03, DLLM04,AHLM07}, the Schnorr-Stimm dichotomy led to the realization 
\cite{BoHiVi05}\ that the finite-state world, unlike any other known to date, is one in which maximum dimension is not only necessary, but also sufficient, 
for randomness.  This in turn led to the discovery of nontrivial extensions of classical theorems on normal numbers \cite{CopErd46,Wall49}\ to new quantitative 
theorems on finite-state dimensions \cite{GuLuMo07,DoLuNa07}, a line of inquiry that will certainly continue.  It has also led to a polynomial-time algorithm 
\cite{BeHeSl13}\ that computes real numbers that are provably absolutely normal (normal in every base) and, via Lempel-Ziv methods, 
to a nearly linear time algorithm for this \cite{LutMay18}.  In parallel with these developments, connections among normality, Weyl equidistribution theorems, 
and Diophantine approximation have led to a great deal of progress surveyed in the books \cite{DajKra02,Buge12}.  This paragraph does not begin to do justice to the 
breadth and depth of recent and ongoing research on normal numbers and their growing involvement with the theory of computing.  It is to be hoped that our strong 
dichotomy theorem and the quantitative methods implicit in it will further accelerate these discoveries.

\section{Divergence and normality}\label{sec:divergence_and_normality}
This section reviews the discrete Kullback-Leibler divergence, introduces asymptotic extensions of this divergence, and uses these to give useful characterizations of Borel normal sequences.
\subsection{The Kullback-Leibler divergence}\label{subs:KL_divergence}
We work in a finite alphabet $\Sigma$ with $2\leq \abs{\Sigma}<\infty$. We write $\Sigma^\ell$ for the set of strings of length $\ell$ over $\Sigma$, $\sstar=\bigcup_{\ell=0}^\infty\Sigma^{\ell}$ for the set of (finite) \emph{strings} over $\Sigma$, $\Sigma^{\omega}$ for the set of (infinite) \emph{sequences} over $\Sigma$, and $\Sigma^{\leq\omega}=\Sigma^{*}\cup\Sigma^{\omega}$. We write $\lambda$ for the empty string, $\abs{w}$ for the length of a string $w\in\sstar$, and $\abs{S}=\omega$ for the length of a sequence $S\in \sinf$. For $x\in \sfi$ and $0\leq i< \abs{x}$, we write $x[i]$ for the $i$-th symbol in $x$, noting that $x[0]$  is the leftmost symbol in $x$. For $x\in \sfi$ and $0\leq i\leq j< \abs{x}$, we write $x[i..j]$ for the string consisting of the $i$-th through $j$-th symbols in $x$.

A \emph{(discrete) probability measure} on a nonempty finite set $\Omega$ is a function $\pi:\Omega\to[0,1]$ satisfying 
\begin{equation}\label{eq:def_prob_meas}
     \sum_{\omega\in \Omega}\pi(\omega)=1.
 \end{equation} 
 We write $\tri(\Omega)$ for the set of all probability measures on $\Omega$, $\tri^+(\Omega)$ for the set of all $\pi\in\tri(\Omega)$ that are \emph{strictly positive} (i.e., $\pi(\omega)>0$ for all $\omega\in \Omega$), $\tri_
 \Q(\Omega)$ for the set of all $\pi\in\tri(\Omega)$ that are rational-valued, and $\tri_{\Q}^{+}(\Omega)=\tri^{+}(\Omega)\cap\tri_{\Q}(\Omega)$. In this paper we are most interested in the case where $\Omega=\Sigma^{\ell}$ for some $\ell\in \Z^{+}$.  

 \begin{figure}
\centering
\begin{subfigure}{.5\textwidth}
  \centering
  \includegraphics[width=1\linewidth]{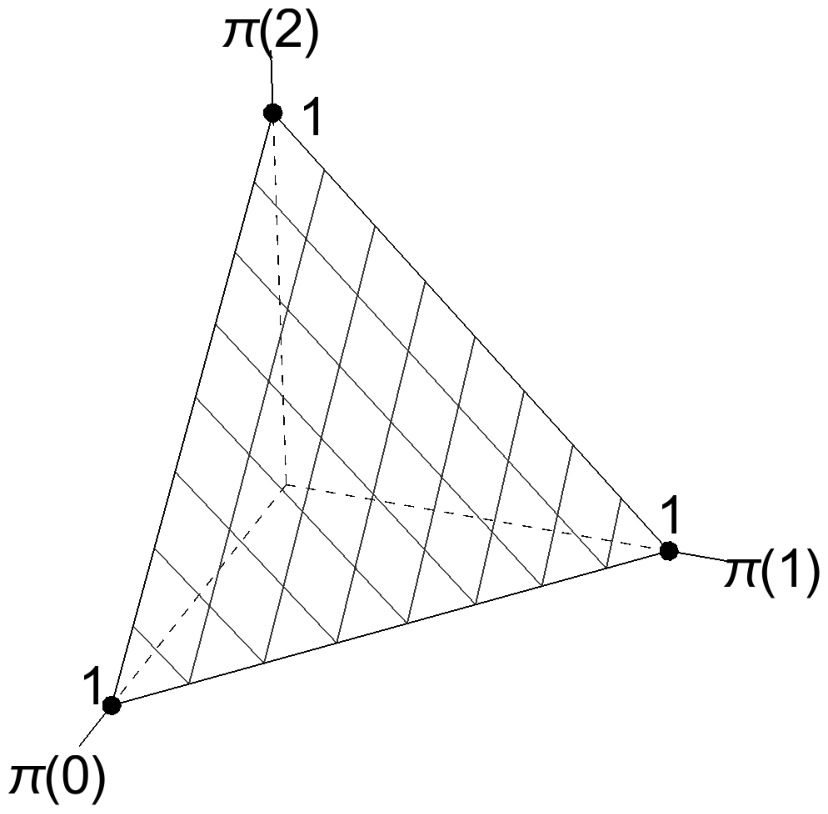}
  \label{fig:sub1}
\end{subfigure}%
\begin{subfigure}{.5\textwidth}
  \centering
  \includegraphics[width=0.8\linewidth]{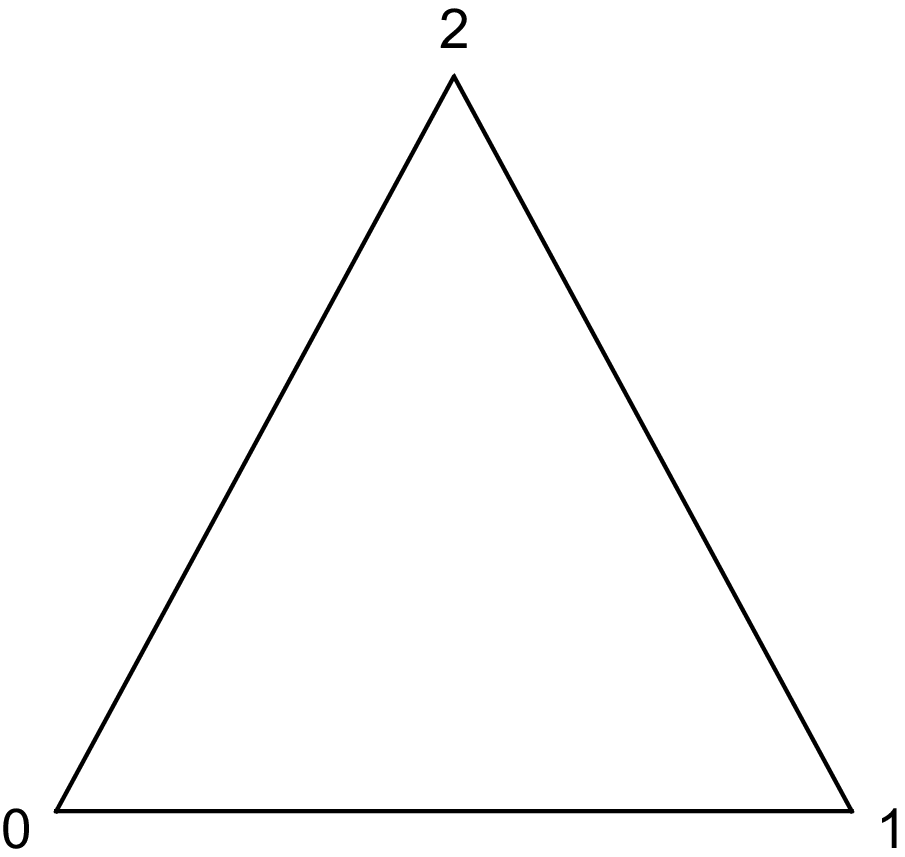}
  \label{fig:sub2}
\end{subfigure}
\caption{Two views of the simplex $\tri(\{0,1,2\})$}
\label{fig:prob_measures}
\end{figure}

 Intuitively, we identify each probability measure $\pi\in\tri(\Omega)$ with the point in $\R^{\abs{\Omega}}$ whose coordinates are the probabilities $\pi(\omega)$ for $\omega\in\Omega$. By (\ref{eq:def_prob_meas}) this implies that $\tri(\Omega)$ is the $(\abs{\Omega}-1)$-dimensional simplex in $\R^{\abs{\Omega}}$ whose vertices are the points at 1 on each of the coordinate axes. (See Figure 1 for an illustration with $|\Omega|=3$.) For each $\omega \in \Omega$, the vertex on axis $\omega$ is the degenerate probability measures $\pi_{\omega}$ with $\pi_{\omega}(\omega)=1$. The centroid of the simplex $\tri(\Omega)$ is the uniform probability measure on $\Omega$, and the (topological) interior of $\trio$ is $\tripo$. We write $\partial\trio=\trio\smallsetminus\tripo$ for the boundary of $\trio$.

    \begin{definition*}(\cite{kullback1951information}).
    Let $\alpha,\beta\in\trio$, where $\Omega$ is a nonempty finite set. The \emph{Kullback-Leibler divergence} (or \emph{KL-divergence}) \emph{of $\beta$ from $\alpha$} is 
\begin{equation}\label{eq:def_KL_divergence}
    D(\alpha||\beta)=E_\alpha \log{\frac{\alpha}{\beta}},
\end{equation}
where the logarithm is base-2.
\end{definition*}

Note that the right-hand side of (\ref{eq:def_KL_divergence}) is the $\alpha$-expectation of the random variable 
\begin{equation*}
    \log{\frac{\alpha}{\beta}}:\Omega\longrightarrow\R
\end{equation*}
defined by 
\[
    \left(\log\frac{\alpha}{\beta}\right)(\omega)=\log\frac{\alpha(\omega)}{\beta(\omega)}
\]
for each $\omega\in \Omega$. Hence (\ref{eq:def_KL_divergence}) is a convenient shorthand for 
\[
    D(\alpha||\beta)=\sum_{\omega\in\Omega}\alpha(\omega)\log\frac{\alpha(\omega)}{\beta(\omega)}.
\]
Note also that $D(\alpha||\beta)$ is infinite if and only if $\alpha(\omega)>0=\beta(\omega)$ the some $\omega\in\Omega$.

The Kullback-Leibler divergence $D(\alpha||\beta)$ is a useful measure of how different $\beta$ is from $\alpha$. It is not a metric (because it is not symmetric and does not satisfy the triangle inequality), but it has the crucial property that $D(\alpha||\beta)\geq0$, with equality if and only if $\alpha=\beta$. The two most central quantities in Shannon information theory, entropy and mutual information, can both be defined in terms of divergence as follows. 
\begin{enumerate}
    \item \emph{Entropy is divergence from certainty.} The \emph{entropy} of a probability measure $\alpha\in \trio$, conceived by Shannon \cite{shannon1948mathematical} as a measure of the uncertainty of $\alpha$, is    
    \begin{equation}\label{eq:entropy_by_divergence}
        H(\alpha)=\sum_{\omega\in\Omega}\alpha(\omega)D(\pi_{\omega}||\alpha),
    \end{equation}
    i.e., the $\alpha$-average of the divergences of $\alpha$ from the ``certainties'' $\pi_{\omega}$.
    \item \emph{Mutual information is divergence from independence.} If $\alpha,\ \beta\in\trio$ have a joint probability measure $\gamma\in \tri(\Omega\times\Omega)$ (i.e., are the marginal probability measures of $\gamma$), then the \emph{mutual information} between $\alpha$ and $\beta$, conceived by Shannon \cite{shannon1948mathematical} as a measure of the information shared by $\alpha$ and $\beta$, is 
    \begin{equation}
        I(\alpha\,;\,\beta)=D(\alpha\beta||\gamma),
    \end{equation}
    i.e., the divergence of $\gamma$ from the probability measure in which $\alpha$ and $\beta$ are independent.     
\end{enumerate}

    Two additional properties of the Kullback-Leibler divergence are useful for our asymptotic concerns. First, the divergence $\dab$ is continuous on $\trio^2$ (as a function  into $[0,\infty]$). Hence, if $\alpha_n\in \trio$ for each $n\in \N$ and $\displaystyle\lim_{n\to\infty}\alpha_n=\alpha$ in the sense of the Euclidean metric on the simplex $\trio$, then $\displaystyle\lim_{n\to \infty} D(\alpha_n||\alpha)=\lim_{n\to\infty}D(\alpha||\alpha_n)=0$. Second, the converse holds. It is well known \cite{oCovTho06} that \[\dab\geq\frac{1}{2\ln 2}\norm{\alpha-\beta}_1^2,\] where 
    \(
        \norm{\alpha-\beta}_1=\sum_{\omega\in\Omega}\abs{\alpha(\omega)-\beta(\omega)}
    \)
    is the $\mathscr{L}_1$-norm. Hence, if either $\displaystyle\lim_{n\to\infty}D(\alpha_n||\alpha)=0$ or $\displaystyle\lim_{n\to\infty}D(\alpha||\alpha_n)=0$, then $\displaystyle\lim_{n\to\infty}\alpha_n=\alpha$.

    More extensive discussions of the Kullback-Leibler divergence appear in \cite{oCovTho06,csiszar2004information}.

\subsection{Asymptotic divergences} 
\label{sub:asymptotic_divergences}
For nonempty strings $w,x\in \sstar$, we write 
\[
    \shpbox(w,x)=\left\vert\big\{ m\leq \frac{\abs{x}}{\abs{w}}-1 \ \bigg\vert \ x[m\abs{w}..(m+1)\abs{w}-1]=w \big\}\right\vert
\]\
for the \emph{number of block occurrences} of $w$ in $x$. Note that $0\leq \shpbox(w,x)\leq \frac{\abs{x}}{\abs{w}}$.

For each $S\in\sinf$, $n\in \Z^+$, and $\lambda\not=w\in\sstar$, the $n$-th \emph{block frequency} of $w$ in $S$ is 
\begin{equation}\label{def:block_freq}
     \pisn(w)=\frac{\shpbox(w,S[0..n|w|-1])}{n} 
 \end{equation} 
 Note that (\ref{def:block_freq}) defines, for each $S\in \sinf$ and $n\in \Z^+$, a function
 \[
     \pisn:\sstar\smallsetminus\{\lambda\}\longrightarrow\Q.
 \]
 For each such $S$ and $n$ and each $\ell \in \Z^+$, let $\pisnl=\pisn\restrict\Sigma^{\ell}$ be the restriction of the function $\pisn$ to the set $\Sigma^\ell$ of strings of length $\ell$. 

 \begin{observation}
 For each $S\in\sinf$ and $n,\,\ell\in \Z^+$,
 \[
     \pisnl\in\tri_{\Q}(\Sigma^{\ell}),
 \]
 i.e., $\pisnl$ is a rational-valued probability measure on $\Sigma^\ell$.
 \end{observation}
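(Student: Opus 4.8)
The plan is to verify the three conditions defining membership in $\tri_{\Q}(\Sigma^{\ell})$: that $\pisnl$ is rational-valued, that its values lie in $[0,1]$, and that they sum to $1$.

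First I would dispose of rationality and the range, which are immediate from the definition. Fix $S\in\sinf$ and $n,\ell\in\Z^+$, and write $x=S[0..n\ell-1]$, a string of length $n\ell$. For each $w\in\Sigma^{\ell}$ we have $\pisnl(w)=\shpbox(w,x)/n$, a nonnegative integer divided by $n$, so $\pisnl(w)\in\Q$. Since $\shpbox(w,x)$ counts a sub-collection of the $n$ length-$\ell$ blocks of $x$, it lies between $0$ and $n$, whence $\pisnl(w)\in[0,1]$.

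The substantive step is normalization. Because $\abs{w}=\ell$ for every $w\in\Sigma^{\ell}$, the quantity $\abs{x}/\abs{w}$ equals $n$, so the index bound $m\le \abs{x}/\abs{w}-1$ in the definition of $\shpbox$ becomes $m\le n-1$. Thus $\shpbox(w,x)$ counts exactly those $m\in\{0,\dots,n-1\}$ for which the block $x[m\ell..(m+1)\ell-1]$ equals $w$. These $n$ consecutive blocks partition $x$ into disjoint pieces, and each block is itself a single string of length $\ell$, hence equal to exactly one $w\in\Sigma^{\ell}$. Summing the block counts over all words of length $\ell$ therefore recovers the total number of blocks, $\sum_{w\in\Sigma^{\ell}}\shpbox(w,x)=n$, and dividing by $n$ yields $\sum_{w\in\Sigma^{\ell}}\pisnl(w)=1$.

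There is no real obstacle here; the one point requiring care is the bookkeeping in the index set of $\shpbox$ — confirming that the divisibility $\abs{x}=n\ell$ makes exactly $n$ blocks, that they tile the prefix $x$ without overlap or remainder, and that each contributes to precisely one summand $\shpbox(w,x)$. Once this disjoint-partition observation is in hand, the normalization is a one-line counting identity and the Observation follows.
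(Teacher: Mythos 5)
Your proof is correct, and it fills in exactly the routine verification the paper omits: the paper states this as an Observation without proof, implicitly relying on the same counting argument that the $n$ disjoint length-$\ell$ blocks of $S[0..n\ell-1]$ each match exactly one $w\in\Sigma^{\ell}$, so the counts sum to $n$ and the frequencies to $1$. Nothing is missing.
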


We call $\pisnl$ the $n$-th \emph{empirical probability measure on $\Sigma^{\ell}$ given by $S$}.

A probability measure $\alpha\in\tri(\Sigma)$ naturally induces, for each $\ell\in \Z^+$, a probability measure $\alpha^{(\ell)}\in \tri(\Sigma^{\ell})$ defined by 
\begin{equation}
     \alpha^{(\ell)}(w)=\prod_{i=0}^{\abs{w}-1}\alpha(w[i]).
 \end{equation} 

 The empirical probability measures $\pisnl$ provide a natural way to define useful empirical divergences of probability measures from sequences.

 \begin{definition*}
    Let $\ell\in\Z^+,$ $S\in\sinf$, and $\alpha\in\tri(\Sigma)$.
    \begin{enumerate}
         \item The \emph{lower $\ell$-divergence of $\alpha$ from} $S$ is 
         $\displaystyle \divsal=\liminf_{n\to\infty}D(\pisnl||\alpha^{(\ell)}).$
          \item The \emph{upper $\ell$-divergence of $\alpha$ from $S$} is 
           $\displaystyle \Divsal=\limsup_{n\to\infty}D(\pisnl||\alpha^{(\ell)}).$
          \item The \emph{lower divergence of $\alpha$ from $S$} is 
              $\displaystyle\divp{S}{\alpha}=\sup_{\ell\in\Z^+}\divsal/\ell$.
          \item The \emph{upper divergence of $\alpha$ from $S$} is 
           $\Divp{S}{\alpha}=\sup_{\ell\in\Z^+}\Divsal/\ell$.
     \end{enumerate} 
 \end{definition*}
 
 A similar approach gives useful empirical divergences of one sequence from another. 
\begin{definition*}
Let $\ell \in \Z^+$ and $S,\, T\in \sinf$.
\begin{enumerate}
    \item The \emph{lower $\ell$-divergence of $T$ from $S$} is 
        $\displaystyle\divpl{S}{T}{\ell}=\liminf_{n\to\infty}D(\pisnl||\,\pitnl)$.
    \item The \emph{upper $\ell$-divergence of $T$ from $S$} is 
    $\displaystyle\Divpl{S}{T}{\ell}=\limsup_{n\to\infty}D(\pisnl||\,\pitnl).$
    \item The \emph{lower divergence of $T$ from $S$} is 
    $\divp{S}{T}=\sup_{\ell\in \Z^+} \divst/\ell$.
    \item The \emph{upper divergence of $T$ from $S$} is
    $\Divp{S}{T}=\sup_{\ell\in \Z^+} \Divst/\ell$.
\end{enumerate}
\end{definition*}
\subsection{Normality} 
\label{ssub:normality}
The following notions are essentially due to Borel \cite{emile1909probabilites}. 
\begin{definition*}
    Let $\alpha\in\tri(\Sigma)$, $S\in \sinf$, and $\ell\in \Z^+$.
    \begin{enumerate}
        \item $S$ is $\alpha$-$\ell$-\emph{normal} if, for all $w\in \Sigma^{\ell}$,
        \[
             \lim_{n\to\infty}\pisn(w)=\alpha^{(\ell)}(w).
        \]
        \item $S$ is $\alpha$-\emph{normal} if, for all $\ell \in \Z^+$, $S$ is $\alpha$-$\ell$-normal.
        \item $S$ is $\ell$-\emph{normal} if $S$ is $\mu$-$\ell$-normal, where $\mu$ is the uniform probability measure on $\Sigma$. 
        \item $S$ is \emph{normal} if, for all $\ell\in\Z^+$, $S$ is $\ell$-normal. 
    \end{enumerate}
\end{definition*}
\begin{lemma}\label{lemma:alpha_l_equiv_2.2}
    For all $\alpha\in \tri(\Sigma)$, $S\in \sinf$, and $\ell \in \Z^+$, the following four conditions are equivalent. 
    \begin{enumerate}
        \item[(1)] $S$ is $\alpha$-$\ell$-normal.
        \item[(2)] $\Divsal$=0.
        \item[(3)] For every $\alpha$-$\ell$-normal sequence $T\in \sinf$, $\Divpl{S}{T}{\ell}=0$.
        \item[(4)] There exists an $\alpha$-$\ell$-normal sequence $T\in \sinf$ such that $\Divpl{S}{T}{\ell}=0$.
    \end{enumerate}
\end{lemma}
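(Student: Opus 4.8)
The plan is to route everything through the single reformulation that $S$ is $\alpha$-$\ell$-normal if and only if $\pisnl\to\al$ in the simplex $\tri(\Sigma^\ell)$, where convergence is just coordinatewise convergence of the finitely many block frequencies; this is immediate from the definition of $\alpha$-$\ell$-normality. With this in hand, the two analytic facts recalled in \Cref{subs:KL_divergence} do all the work: that $\mu_n\to\mu$ implies $D(\mu_n||\mu)\to 0$ (continuity), and the Pinsker-type bound $D(\mu||\nu)\geq\frac{1}{2\ln 2}\norm{\mu-\nu}_1^2$, whose contrapositive says that $D(\mu_n||\nu_n)\to 0$ forces $\norm{\mu_n-\nu_n}_1\to 0$.

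First I would prove $(1)\Leftrightarrow(2)$. For $(1)\Rightarrow(2)$, the reformulation gives $\pisnl\to\al$, so by continuity $D(\pisnl||\al)\to 0$ and hence $\Divsal=\limsup_n D(\pisnl||\al)=0$. For $(2)\Rightarrow(1)$, each term $D(\pisnl||\al)$ is nonnegative, so $\Divsal=0$ forces $D(\pisnl||\al)\to 0$; Pinsker then gives $\norm{\pisnl-\al}_1\to 0$, i.e.\ $\pisnl\to\al$, which is $\alpha$-$\ell$-normality.

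Next I would close the loop $(1)\Rightarrow(3)\Rightarrow(4)\Rightarrow(1)$. For $(1)\Rightarrow(3)$, let $T$ be any $\alpha$-$\ell$-normal sequence; then $\pisnl\to\al$ and $\pitnl\to\al$, so both arguments of $D(\pisnl||\pitnl)$ converge to the same point $\al$, and joint continuity on $\tri(\Sigma^\ell)^2$ yields $D(\pisnl||\pitnl)\to D(\al||\al)=0$, whence $\Divpl{S}{T}{\ell}=0$. The step $(3)\Rightarrow(4)$ needs only that at least one $\alpha$-$\ell$-normal sequence exists, which follows from the classical fact that a sequence drawn i.i.d.\ according to $\alpha$ is almost surely $\alpha$-normal, hence $\alpha$-$\ell$-normal. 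For $(4)\Rightarrow(1)$, fix the witnessing $\alpha$-$\ell$-normal $T$ with $\Divpl{S}{T}{\ell}=0$; as before this gives $D(\pisnl||\pitnl)\to 0$, so $\norm{\pisnl-\pitnl}_1\to 0$ by Pinsker, and since $\pitnl\to\al$ the triangle inequality gives $\pisnl\to\al$, i.e.\ $S$ is $\alpha$-$\ell$-normal.

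The calculations above are routine; the delicate point is the behavior at the boundary $\partial\tri(\Sigma)$. When $\alpha$ is not strictly positive, $\al$ has vanishing coordinates, and $D(\mu||\al)$ jumps to $\infty$ as soon as $\mu$ places positive mass on a string $w$ with $\al(w)=0$. This can break the clean continuity step in $(1)\Rightarrow(2)$ and $(1)\Rightarrow(3)$: an $\alpha$-$\ell$-normal $S$ may place vanishing-but-positive frequency on such a $w$ for infinitely many $n$, making every $D(\pisnl||\al)$ infinite even though $\pisnl\to\al$. I therefore expect the main obstacle to be either restricting to $\alpha\in\tri^+(\Sigma)$, where all the divergences stay finite and the argument above goes through unchanged, or, to cover every $\alpha\in\tri(\Sigma)$, supplementing the continuity appeals with a separate check that the strings on which $\al$ vanishes contribute nothing in the limit, using that $\alpha$-$\ell$-normality drives their frequencies to $0$, so that the surviving divergence is computed on the positive part of the support where continuity does apply.
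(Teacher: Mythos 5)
Your proof uses exactly the same two ingredients as the paper's---continuity of $D$ for the forward directions and the Pinsker $\mathscr{L}_1$ bound plus the triangle inequality for the reverse ones---and differs only in replacing the paper's single cycle $(1)\Rightarrow(2)\Rightarrow(3)\Rightarrow(4)\Rightarrow(1)$ by the equivalent $(1)\Leftrightarrow(2)$ together with $(1)\Rightarrow(3)\Rightarrow(4)\Rightarrow(1)$, so it is essentially the paper's proof. Your closing worry about boundary $\alpha$ is well taken: the paper simply asserts continuity of $D$ into $[0,\infty]$ on all of $\tri(\Omega)^2$ and never treats $\alpha\notin\tri^+(\Sigma)$, where $D(\pisnl||\al)$ can be identically $+\infty$ even though $\pisnl\to\al$ (a vanishing-but-positive frequency on a string $w$ with $\al(w)=0$ makes every term infinite), so the clean statement really does need either $\alpha\in\tri^+(\Sigma)$ or an extra argument---and note that your proposed repair of showing the null-support strings ``contribute nothing'' cannot work as stated, since any positive mass there already forces the divergence to be infinite.
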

\begin{proof}
    Let $\alpha,\ S,$ and $\ell$ be as given. 

    To see that (1) implies (2), assume (1). Then $\displaystyle\lim_{n\to\infty}\pisnl=\alpha^{(\ell)}$, so the continuity of KL-divergence tells us that 
    \[
        \Divg_{\ell}(S||\alpha)=\lim_{n\to\infty}D(\pisnl||\alpha^{(\ell)})=0,
    \]
    i.e., that (2) holds. 

    To see that (2) implies (3), assume (2). Then $\displaystyle\lim_{n\to\infty}D(\pisnl||\alpha^{(\ell)})=\ell\Divg_{\ell}(S||\alpha)=0$, whence the $\mathscr{L}_1$ bound in section \ref{subs:KL_divergence} tells us that $\displaystyle\lim_{n\to\infty}\pisnl=\alpha^{(\ell)}$. For any $\alpha$-$\ell$-normal sequence $T\in \sinf$, we have $\displaystyle\lim_{n\to\infty}\pitnl=\alpha^{(\ell)}$, whence the continuity of KL-divergence tells us that 
    \[
         \Divst=\lim_{n\to\infty}D(\pisnl||\pitnl)=D(\alpha^{(\ell)}||\alpha^{(\ell)})=0,
     \] 
     i.e., that (3) holds. 

     Since $\alpha$-$\ell$-normal sequences exist, it is trivial that (3) implies (4). 

     Finally, to see that (4) implies (1), assume that (4) holds. Then we have 
     \[
         \lim_{n\to \infty}D(\pisnl||\pitnl)=\Divst=0,
     \]
     whence the $\mathscr{L}_1$ bound in section \ref{subs:KL_divergence} tells us that
 \begin{equation}\label{eq:ST_l_close}
     \lim_{n\to\infty}\norm{\pisnl-\pitnl}_1=0.
 \end{equation}
     We also have 
     \[
         \lim_{n\to\infty} \pitnl=\alpha^{(\ell)},
     \]
     whence 
     \begin{equation}\label{eq:pi_T_alpha_close}
         \lim_{n\to\infty}\norm{\pitnl-\alpha^{(\ell)}}_1 =0.
     \end{equation}
     By (\ref{eq:ST_l_close}), (\ref{eq:pi_T_alpha_close}), and the triangle inequality for the $\mathscr{L}_1$-norm, we have 
     \[
         \lim_{n\to\infty}\norm{\pisnl-\alpha^{(\ell)}}_1=0,
     \]
     whence 
     \[
        \lim_{n\to\infty}\pisnl=\alpha^{(\ell)}, 
     \]
     i.e., (1) holds.    
\end{proof}
Lemma \ref{lemma:alpha_l_equiv_2.2} immediately implies the following.
                 
\begin{theorem}[divergence characterization of normality]\label{thm:divg_char_normality}                
    For all $\alpha\in \tri(\Sigma)$ and $S\in \sinf$, the following conditions are equivalent.
    \begin{enumerate}
        \item [(1)] $S$ is $\alpha$-normal.
        \item [(2)] $\Divg(S||\alpha)=0$.
        \item [(3)] For every $\alpha$-normal sequence $T\in \sinf$, $\Divp{S}{T}=0$.
        \item [(4)] There exists an $\alpha$-normal sequence $T\in \sinf$ such that $\Divp{S}{T}=0$.
    \end{enumerate}
\end{theorem}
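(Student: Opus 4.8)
The plan is to lift the level-$\ell$ equivalences of \cref{lemma:alpha_l_equiv_2.2} to the full (all-levels) divergences, closing the cycle (1)$\Rightarrow$(2)$\Rightarrow$(3)$\Rightarrow$(4)$\Rightarrow$(1). The two bridges I need are pure bookkeeping. First, by definition $S$ is $\alpha$-normal exactly when it is $\alpha$-$\ell$-normal for every $\ell\in\Z^+$. Second, since the Kullback--Leibler divergence is nonnegative, each $\Divsal$ and each $\Divst$ is nonnegative; hence the suprema $\Divp{S}{\alpha}=\sup_{\ell}\Divsal/\ell$ and $\Divp{S}{T}=\sup_{\ell}\Divst/\ell$ vanish if and only if every term in the supremum vanishes. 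In other words, condition (2) here is equivalent to ``$\Divsal=0$ for all $\ell$,'' and the inner equality appearing in conditions (3) and (4) is equivalent to ``$\Divst=0$ for all $\ell$.'' Each of the four conditions in the theorem thus unpacks into the conjunction over all $\ell$ of the corresponding condition in the lemma.

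With these bridges in hand the implications are immediate. For (1)$\Rightarrow$(2): if $S$ is $\alpha$-normal then it is $\alpha$-$\ell$-normal for every $\ell$, so the lemma gives $\Divsal=0$ for all $\ell$, and the supremum bridge yields $\Divp{S}{\alpha}=0$. For (2)$\Rightarrow$(3): from $\Divp{S}{\alpha}=0$ I get $\Divsal=0$ for every $\ell$; fixing any $\alpha$-normal $T$ (which is in particular $\alpha$-$\ell$-normal at each $\ell$), the lemma's (2)$\Rightarrow$(3) gives $\Divst=0$ for every $\ell$, hence $\Divp{S}{T}=0$. For (4)$\Rightarrow$(1): given an $\alpha$-normal $T$ with $\Divp{S}{T}=0$, the supremum bridge gives $\Divst=0$ for all $\ell$, and since $T$ is $\alpha$-$\ell$-normal at each $\ell$, the lemma's (4)$\Rightarrow$(1) makes $S$ $\alpha$-$\ell$-normal for all $\ell$, i.e.\ $\alpha$-normal.

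The only step requiring input beyond the lemma and this bookkeeping is (3)$\Rightarrow$(4), which needs at least one $\alpha$-normal sequence $T$ to exist; here I would invoke the standard fact (a Champernowne-style construction generalizing Borel's) that $\alpha$-normal sequences exist for every $\alpha\in\tri(\Sigma)$. The one place a careless argument could slip, and hence the point I expect to be the main (if mild) obstacle, is the quantifier order in condition (3): the theorem quantifies over $\alpha$-normal $T$, whereas the lemma quantifies, at each fixed $\ell$, over the \emph{larger} class of $\alpha$-$\ell$-normal $T$. Since every $\alpha$-normal sequence is $\alpha$-$\ell$-normal for each $\ell$, applying the lemma's conclusion to this smaller class is sound and no real difficulty arises; I simply need to record the inclusion explicitly rather than silently interchange the quantifiers over $\ell$ and over $T$.
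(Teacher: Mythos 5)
Your proposal is correct and matches the paper's route: the paper derives the theorem directly from Lemma \ref{lemma:alpha_l_equiv_2.2} (stating only that the lemma ``immediately implies'' it), and your bookkeeping --- the nonnegativity of KL-divergence forcing the supremum over $\ell$ to vanish iff every term does, plus the existence of $\alpha$-normal sequences for (3)$\Rightarrow$(4) --- is exactly the content being elided. Your explicit note about the quantifier over $\alpha$-$\ell$-normal versus $\alpha$-normal $T$ is a careful touch the paper leaves implicit.
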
                   
\section{Strong Dichotomy}
This section presents our main theorem, the strong dichotomy theorem for finite-state gambling. We first review finite-state gamblers. 

Fix a finite alphabet $\Sigma$ with $\abs{\Sigma}\geq 2$. 

\begin{definition*} [\cite{SchSti72,Fede91,DLLM04}]
	A \emph{finite-state gambler (FSG)} is a 4-tuple
	\[
		G=(Q,\delta,s,B),
	\]
where $Q$ is a finite set of \emph{states}, $\delta:Q\times \Sigma\rightarrow Q$ is the \emph{transition function}, $s\in Q$ is the \emph{start state}, and $B:Q\rightarrow \tri_{\Q}(\Sigma)$ is the \emph{betting function}.
\end{definition*}

The transition structure $(Q,\delta, s)$ here works as in any deterministic finite-state automaton. For $w\in \sstar$, we write $\delta(w)$ for the state reached from $s$ by processing $w$.

Intuitively, a gambler $G=(Q,\delta,s, B)$ bets on the successive symbols of a sequence $S\in \sinf$. The payoffs in the betting are determined by a \emph{payoff probability measure} $\alpha\in \tri(\Sigma)$.  (We regard $\alpha$ and $S$ as external to the gambler $G$.) We write $d_{G,\alpha}(w)$ for the gambler $G$'s capital (amount of money) after betting on the successive bits of a prefix $w\prefix S$, and we assume that the initial capital is $d_{G,\alpha}(\lambda)=1$.

The meaning of the betting function $B$ is as follows. After betting on a prefix $w\prefix S$, the gambler is in state $\delta(w)\in Q$. The betting function $B$ says that, for each $a\in \Sigma$, the gambler bets the fraction $B(\delta(w))(a)$ of its current capital $d_{G,\alpha}(w)$ that $wa\prefix S$, i.e., that the next symbol of $S$ is an $a$. If it then turns out to be the case that $wa\prefix S$, the gambler's capital will be 
\begin{equation}\label{eq:gambler_def}
	d_{G,\alpha}(wa)=\dga(w)\frac{B\big(\delta(w)\big)(a)}{\alpha(a)}.
\end{equation}
(Note: If $\alpha(a)=0$ here, we may define $\dga(wa)$ however we wish.) 

The payoffs in (\ref{eq:gambler_def}) are fair with respect to $\alpha$, which means that the conditional $\alpha$-expectation 
\[
	\sum_{a\in \Sigma}\alpha(a)\dga(wa)
\]
of $\dga(wa)$, given that $w\prefix S$, is exactly $\dga(w)$. This says that the function $\dga$ is an $\alpha$-\emph{martingale}.

If $\delta(w)=q$ is a state in which $B(q)=\alpha$, then (\ref{eq:gambler_def}) says that, for each $a\in \Sigma$, $\dga(wa)=\dga(w)$. That is, the condition $B(q)=\alpha$ means that $G$ \emph{does not bet} in state $q$. Accordingly, we define the \emph{risk} that $G$ takes in a state $q\in Q$ to be 
\[
	\risk_G(q)=D\big(\alpha ||B(q)\big).
\]
i.e., the divergence of $B(q)$ from not betting. We also define the \emph{total risk} that $G$ takes \emph{along} a string $w\in \sstar$ to be 
\[
	\Risk_G(w)=\sum_{u\sprefix w} \risk_{G}\big(\delta(u)\big).
\]
We now state our main theorem. 

\begin{theorem}[strong dichotomy theorem]
Let $\alpha \in \tri(\Sigma)$, $S\in \sinf$, and $\gamma<1$.
\begin{enumerate}
	\item If $S$ is not $\alpha$-normal, then there is a finite-state gambler $G$ such that, for infinitely many prefixes $w\prefix S$,
	\[
		\dga(w)> 2^{\gamma \Divg(S||\alpha)\abs{w}}. 
	\]
	\item If $S$ is $\alpha$-normal, then, for every finite-state gambler $G$, for all but finitely many prefixes $w\prefix S$,
	\[
		\dga(w)<2^{-\gamma \Risk_G(w)}.
	\]
\end{enumerate}
\end{theorem}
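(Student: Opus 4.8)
The plan is to prove the two parts by separate arguments, both driven by rewriting the gambler's log-capital as a sum of Kullback--Leibler divergences and then invoking Theorem~\ref{thm:divg_char_normality} to convert (non-)normality into a statement about these divergences. Part~1 will exhibit a single fixed block-betting gambler that exploits the deviation of the empirical block frequencies from $\al$; Part~2 will decompose any gambler's capital state-by-state and control the symbols read from each state via an Agafonov-type frequency argument.

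For Part~1, since $S$ is not $\alpha$-normal, Theorem~\ref{thm:divg_char_normality} gives $\Divp{S}{\alpha}>0$. Fix $\gamma<\gamma'<1$. By the definition $\Divp{S}{\alpha}=\sup_{\ell}\Divsal/\ell$, I would choose $\ell$ with $\Divsal>\gamma'\,\ell\,\Divp{S}{\alpha}$. Since $\Divsal=\limsup_{n}D(\pisnl||\al)$ and the simplex $\tri(\Sigma^{\ell})$ is compact, I pass to a subsequence $n_k\to\infty$ along which $D(\pisnkl||\al)\to\Divsal$ and $\pisnkl\to\pi^{*}$ for some $\pi^{*}\in\tri(\Sigma^{\ell})$; continuity of KL-divergence then gives $D(\pi^{*}||\al)=\Divsal$. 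Next I pick $\beta\in\tri_{\Q}^{+}(\Sigma^{\ell})$ so close to $\pi^{*}$ that $D(\pi^{*}||\beta)<(\gamma'-\gamma)\ell\,\Divp{S}{\alpha}$, and let $G$ be the FSG that bets on successive length-$\ell$ blocks according to $\beta$ (realized with states indexed by partial blocks $u\in\Sigma^{<\ell}$, betting the $\beta$-conditional of the next symbol; strict positivity of $\beta$ avoids ruin). A telescoping of the per-symbol payoffs shows reading a block $b$ multiplies the capital by $\beta(b)/\al(b)$, so at the block boundary $w=S[0..n\ell-1]$,
\[
\log\dga(w)=n\bigl[D(\pisnl||\al)-D(\pisnl||\beta)\bigr].
\]
Along $n_k$ the bracket tends to $\Divsal-D(\pi^{*}||\beta)>\gamma\,\ell\,\Divp{S}{\alpha}$, and since $\abs{w}=n_k\ell$ this yields $\dga(w)>2^{\gamma\Divp{S}{\alpha}\abs{w}}$ for infinitely many $w$, as required.

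For Part~2, write $q_i=\delta(S[0..i-1])$ for the state from which $G$ bets on $S[i]$, and for a prefix $w$ of length $m$ let $n_q$ be the number of $i<m$ with $q_i=q$ and $\nu_q\in\tri(\Sigma)$ the empirical distribution of the symbols read from $q$. Grouping the per-symbol terms of $\log\dga(w)$ by state gives
\[
\log\dga(w)=\sum_{q}n_q\bigl[D(\nu_q||\alpha)-D(\nu_q||B(q))\bigr],
\qquad
\Risk_G(w)=\sum_{q}n_q\,\risk_G(q).
\]
The crux is an Agafonov-type fact: because $S$ is $\alpha$-normal and ``select $S[i]$ when $q_i=q$'' is a finite-state selection rule, every state $q$ visited infinitely often satisfies $\nu_q\to\alpha$ as $w\to S$. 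Continuity of KL-divergence then makes the per-visit gain $D(\nu_q||\alpha)-D(\nu_q||B(q))\to-\risk_G(q)$, so for fixed $\gamma<1$ it is eventually below $-\gamma\,\risk_G(q)$; non-betting states ($B(q)=\alpha$) contribute $0$ to both $\log\dga(w)$ and $\Risk_G(w)$, and states visited only finitely often contribute a bounded amount to each. Hence, provided $G$ bets infinitely often, $\log\dga(w)+\gamma\Risk_G(w)\to-\infty$, giving $\dga(w)<2^{-\gamma\Risk_G(w)}$ for all but finitely many $w$.

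The main obstacle is exactly the Agafonov-type convergence $\nu_q\to\alpha$ in Part~2: unlike a sliding-window statistic, the state $q_i$ is a \emph{global} function of the whole prefix $S[0..i-1]$, so $\alpha$-$\ell$-normality for a single $\ell$ does not by itself control which symbols are read from $q$; one must combine the full finite-state transition structure with $\alpha$-normality (the ``routine ergodic argument'' of \cite{SchSti72}). A secondary caveat concerns genuinely finite betting: if $G$ bets only finitely often, both $\dga(w)$ and $\Risk_G(w)$ freeze at constants and the stated inequality can fail (e.g.\ a gambler that bets once, wins, and then halts), so Part~2 is to be read for gamblers that bet infinitely many times, matching the phrasing of the Schnorr--Stimm dichotomy.
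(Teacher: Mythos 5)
Your proposal is correct and follows essentially the same route as the paper: Part~1 via a block-betting gambler that hard-codes a rational measure close to the limiting empirical block distribution (the paper simply takes that rational measure to be one of the empirical measures $\pisnkl$ themselves, which are automatically rational, rather than a separately chosen approximant $\beta$), and Part~2 via the state-wise decomposition of $\log \dga(w)$ combined with Agafonov's frequency result for the symbols read from each recurrent state. Your closing caveat about gamblers that risk only a bounded total amount is well taken: the paper's final step $\log \dga(w)=\Risk_G(w)(-1+o(1))$ likewise presupposes $\Risk_G(w)\to\infty$, which is among the ``asymptotic caveats'' the abstract alludes to.
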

\begin{proof}
	To prove the first part, let $S$ be a non-normal sequence. Then by Theorem \ref{thm:divg_char_normality} we know that $\Divg(S||\alpha)>0$. Let $r<1$ and let $\epsilon>0$.
	By the definition of $\Divg(S||\alpha)$ there must exist $\ell$ such that
	\begin{equation}\label{eq:r_r_prime_Div}
		\Divg_\ell(S||\alpha)/\ell>r\Divg(S||\alpha).
	\end{equation}
	That is 
	\[
		\limsup_{n\to \infty}D(\pisnl||\alpha^{(\ell)})>\ell r\Divg(S||\alpha).
	\]

	We can pick a subsequence of indices $n_k$'s, such that $\lim_{k\to \infty}D(\uppi_{S,n_k}^{(\ell)}||\alpha^{(\ell)})=\Divg_\ell(S||\alpha)$. Therefore by inequality (3.2)
      \begin{equation}\label{ineq:pick_nk}
		D(\pisnkl || \alpha^{(\ell)})>\ell r\Divg(S||\alpha)
	\end{equation}
	for sufficiently large $k$. In particular, by compactness of $\R^{\abs{\Sigma^{\ell}}}$ equipped with $\mathscr{L}_1$-norm, we can further request that 
	\begin{equation}\label{eq:limit_exist}
		\lim_{k\to\infty}\pisnkl  \quad \text{exists}.
	\end{equation}

	Let $\uppi_0=\uppi_0(r, m)\in \tri_{\mathbb{Q}}(\Sigma^{\ell})$ be the $m$-th $\uppi_{S,n_k}^{(\ell)}$ that satisfies (\ref{ineq:pick_nk}), indexed by $k$. By the way we define $\uppi_0$, we have
	\begin{equation}\label{eq:way_to_pick_pi_zero}
		D(\uppi_{S, n_k}||\alpha^{(\ell)})>D(\uppi_0||\alpha^{(\ell)})> \ell r\Divg(S||\alpha),
	\end{equation}
	and 
		\begin{equation}
		\norm{\uppi_0 -\pisnkl}\to 0, \quad\text{as $m\to \infty$ and $k\to \infty$},
	\end{equation}
	whence $D$'s continuity in section \ref{subs:KL_divergence} tells us that 
	\begin{equation}\label{eq:divg_tend_to_zero}
		D(\pisnkl||\uppi_0)\to0, \quad\text{as $m\to \infty$ and $k\to \infty$}.
	\end{equation}
	Also note that,
 	\begin{equation}
		\lim_{m\to \infty} D(\uppi_0||\al)=\lim_{k\to \infty}D(\uppi_{S,n_k}^{(\ell)}||\al)= \Divg_{\ell}(S||\alpha)>0.
	\end{equation}
	
	For a fixed $\uppi_0=\uppi_0(r,m)$, by the definition, for any $n_k$ sufficiently large, we have 
	\begin{equation}\label{ineq:divg_away_from_zero}
		D(\uppi_{S, n_k}^{(\ell)}|| \alpha^{(\ell)})>D(\uppi_0||\alpha^{(\ell)})(1-\epsilon)>0.
	\end{equation}

	By doing the above we pick a probability measure $\uppi_0$ that is ``far'' away from $\alpha^{(\ell)}$, we now hard code $\uppi_0$ in a gambler $G=(Q,\delta, s, B)$, where 
\begin{equation*}
\begin{aligned}[c]
Q=\Sigma^{\leq\ell-1},
\end{aligned}
\qquad
\begin{aligned}[c]
\delta(w,a)=\begin{cases}
	wa &\quad \text{if $\abs{wa}<\ell$} \\
	\lambda&\quad \text{if $\abs{wa}=\ell$}
\end{cases}
\end{aligned},
\qquad
\begin{aligned}[c]
 s=\lambda,
\end{aligned}
\qquad and \qquad
\begin{aligned}[c]
 B(w)(a)=\uppi_0(a | w),
\end{aligned}
\end{equation*}
where $\uppi_0(a | w) $ describes the conditional probability (induced by $\uppi_0$) of occurrence of an $a$ after $w\in Q$, and is defined by  $\uppi_0(a | w) =\uppi_0(wa)/\uppi_0(w)$, where for $u\in Q$, the notation $\uppi_0(u)$ is defined recursively by $\uppi_0(w)=\sum_{a\in\Sigma}\uppi_0(wa)$. 

Note that $G$ can be viewed as a gambler gambling on every $\ell$ symbols, in the way that he always ``waits'' until he sees the first $\ell-1$ symbols of a string $u=wa$ 
of length $\ell$, and then bets a fraction of $\uppi_{0}(wa)$ of his capital on the next symbol being an $a$.

Let $u=a_0\cdots a_{\ell -1}$ be in $\Sigma^{\ell}$. The following observation captures the above intuition:
\[
	\frac{B(\lambda)(a_0)\cdots B(u[0..\ell-2])(a_{\ell-1})}{\alpha(a_0)\cdots\alpha(a_{\ell-1})}=\frac{\uppi_0(u)}{\alpha^{(\ell)}(u)}.
\]
Now let $w=S\restrict n_k$ for some $k$. We can view $w$ as
\[
 	w=u_0u_1\cdots u_{n-1}u_n, \quad \text{where $\abs{u_i}=\ell$ for $0\leq i\leq n-1$ and $u_n=a_0\cdots a_m$ with $m < \ell$}.
 \] 
 Then we have 
 \begin{equation}
  	d_{G,\alpha}(w)=\bigg(\prod_{0}^{n-1}\frac{\uppi_{0}(u_i)}{\alpha(u_i)}\bigg)\frac{B(\lambda)(a_0)\cdots B(u_n[0..m-1])(a_{m})}{\alpha(a_0)\cdots\alpha(a_{m})}
 	\geq C_0\prod_{0}^{n-1}\frac{\uppi_{0}(u_i)}{\alpha(u_i)}, \label{eq:constant_factor}	
 \end{equation}
 where $C_0$ is the minimum value of $\frac{B(\lambda)(a_0)\cdots B(u_n[0..m-1])(a_{m})}{\alpha(a_0)\cdots\alpha(a_{\ell-1})}$, where $u_n=a_0\cdots a_m$ ranges over $\Sigma^{<\ell}$. 
 {}
 Taking $\log$ on both sides of (\ref{eq:constant_factor}) we get
 \begin{align}
 	\log d_{G,\alpha}(w)-\log C_0&\geq \sum_{i=0}^{n-1}\log\frac{\uppi_0(u_i)}{\alpha^{(\ell)}(u_i)}=\sum_{\abs{u}=\ell}\shpbox(u,w)\log\frac{\uppi_0(u)}{\alpha^{(\ell)}(u)}, \notag\\
 	&=n\sum_{\abs{u}=\ell}\frac{\shpbox(u,w)}{n}\log\frac{\uppi_0(u)}{\alpha^{(\ell)}(u)}=n\sum_{\abs{u}=\ell}\pisn^{(\ell)}(u)\log\frac{\uppi_0(u)}{\alpha^{(\ell)}(u)}, \notag\\
 	&=n\sum_{\abs{u}=\ell}\bigg[\pisnl(u)\log\frac{\pisnl(u)}{\alpha^{(\ell)}(u)}-\pisnl(u)\log\frac{\pisnl(u)}{\uppi_0(u)}\bigg]&\notag\\
 	&=n\bigg(D(\pisnl||\al)-D(\pisnl||\uppi_0)\bigg)\label{ineq:diff_of_divg}&
 \end{align}
Then by (\ref{eq:divg_tend_to_zero}) and (\ref{ineq:divg_away_from_zero}), we have 
\begin{align*}
	\log d_{G,\alpha}-\log C_0&\geq	n\bigg(D(\pisnl||\al)-D(\pisnl||\uppi_0)\bigg) \\
	&\geq n\bigg(D(\uppi_0||\al)(1-\epsilon)-D(\pisnl||\uppi_0)\bigg)\geq \frac{\abs{w}}{\ell}D(\uppi_0||\al)(1-2\epsilon).
\end{align*}
Therefore, by (\ref{eq:way_to_pick_pi_zero}) we have
\[
	d_{G,\alpha}(w)\geq C_0 2^{\frac{\abs{w}(1-2\epsilon)}{\ell}D(\uppi_0||\al)}\geq 2^{\abs{w}r(1-2\epsilon)\Divg(S||\alpha)}.
\]
Since $r$ and $1-2\epsilon$ can be picked arbitrary close to $1$, take $r(1-2\epsilon) >\gamma$, then 
\[
	d_{G,\alpha}(w)\geq 2^{\gamma\Divg(S||\alpha)\abs{w}}
\]
for $w=S\restrict n_k$  long enough.

We now prove the second part of the main theorem. 

	Let $S$ be a normal number, $G$ an arbitrary finite-state gambler. By Proposition 2.5 of \cite{SchSti72}, $G=(Q,\delta, s,B)$ will eventually reach to a \emph{bottom strongly connected component} (a component that has no path to leave) when processing $S$. A similar argument can also be found in \cite{shen2017automatic}. Without loss of generality, we will therefore assume that every state in $G$ is recurrent in processing $S$.

	Let $w=a_0\cdots a_{n-1}\prefix S$. Then
\begin{align}
		d_{G,\alpha}(w)&=\frac{B\big(\delta(\lambda)\big)(a_0)\cdots B\big(\delta(w[a_0..a_{n-2}])\big)(a_{n-1})}{\alpha(a_0)\cdots\alpha(a_{n-1})}\notag	\\
		&=\prod_{q\in Q}\prod_{a\in\Sigma}\bigg(\frac{B(q)(a)}{\alpha(a)}\bigg)^{\shpgw(q,a)}\label{eq:capital_after_w},
\end{align}
where the notation $\shpgw(q,a)$ denotes the number of times $G$ lands on state $q$ and the next symbol is $a$ while processing $w$. Similarly, we use the notation $\shpgw(q)$ to denote the number of times $G$ lands on $q$ in the same process. 

Taking the logarithm of both sides of (\ref{eq:capital_after_w}), we have 
\begin{align}
	\log \dga(w)&=\sum_{q\in Q}\sum_{a\in \Sigma}\shpgw(q,a)\log\frac{B(q)(a)}{\alpha(a)}\notag\\
				&=\sum_{q\in Q}\shpgw(q)\sum_{a\in\Sigma}\frac{\shpgw(q,a)}{\shpgw(q)}\log\frac{B(q)(a)}{\alpha(a)}\label{eq:capital_freq_of_state}
\end{align}

By a result of Agafonov \cite{agafonov1968normal}, which extends easily to the arbitrary probability measures considered here, we have that, for every state $q$, the limit of $\frac{\shpgw(q,a)}{\shpgw(s)}$ along $S$ exists and converges to $\alpha(a)$. That is 
\begin{equation}\label{eq:aga_lim}
\lim_{w\to S}\frac{\shpgw(q,a)}{\shpgw(s)}=\alpha(a),
\end{equation}
for every state $q$.

Therefore, by equations (\ref{eq:capital_freq_of_state}) and (\ref{eq:aga_lim}), and the fact that there are finitely many states, we have
\begin{align*}
	\log \dga(w)&\leq\sum_{q\in Q}\shpgw(q)\sum_{a\in\Sigma}(\alpha(a) + o(1))\,\log\frac{B(q)(a)}{\alpha(a)}\\
				&=\sum_{q\in Q}-\risk_G(q)\shpgw(q) + \sum_{q\in Q}\shpgw(q)\sum_{a\in\Sigma}o(1)\,\log\frac{B(q)(a)}{\alpha(a)}\\
				&=-\Risk_G(w)+ \sum_{q\in Q}\shpgw(q)\sum_{a\in\Sigma}o(1)\,\log\frac{B(q)(a)}{\alpha(a)}\label{ineq:risk_error}\\
				&=\Risk_G(w)(-1+o(1)).
\end{align*}

It follows that 
\[
	\dga(w)\leq 2^{- (1+o(1))\Risk_B(w)},
\]
so part 2 of the theorem holds.

\end{proof}

\section{Dimension}
Finite-state dimensions give a particularly sharp formulation of part 1 of the strong dichotomy theorem, along with a dual of this result. 

Finite-state dimensions were introduced for the uniform probability measure on $\Sigma$ in \cite{DLLM04,AHLM07} and extended to arbitrary probability measure on $\Sigma$ in \cite{lutz11}. For each $\alpha\in \tri(\Sigma)$ and each $S\in \sinf$, define the sets
\[
	\mathfrak{G}^{\alpha}(S)=\bigg\{s\in[0,\infty)\bigg\vert\
	(\exists \text{FSG } G) \limsup_{w\to S}\alpha^{\abs{w}}(w)^{1-s}\dga(w)=\infty \bigg\}
\]
and 
\[
	\mathfrak{G}_{\str}^{\alpha}(S)=\bigg\{s\in[0,\infty)\bigg\vert\
	(\exists \text{FSG } G) \liminf_{w\to S}\alpha^{\abs{w}}(w)^{1-s}\dga(w)=\infty \bigg\}
\]
The limits superior and inferior here are taken for successively longer prefixes $w\prefix S$. The ``strong'' subscript of $\mathfrak{G}_{\str}(S)$ refers to the fact that $\alpha^{\abs{w}}(w)^{1-s}\dga(w)$ is required to converge to infinity in a stronger sense than in $\mathfrak{G}^{\alpha}(S)$.
\begin{definition*}[\cite{lutz11}]
Let $\alpha\in \tri(\Sigma)$ and $S\in \sinf$.
	\begin{enumerate}
		\item The \textit{finite-state $\alpha$-dimension} of $S$ is $\dim_{\fs}^{\alpha}(S)=\inf\mathfrak{G}^{\alpha}(S)$.
		\item The \textit{finite-state strong $\alpha$-dimension} of $S$ is $\Dim_{\fs}^{\alpha}(S)=\inf\mathfrak{G}_{\str}^{\alpha}(S)$
	\end{enumerate}
\end{definition*}

It is easy to see that, for all $\alpha\in\tri^+(\Sigma)$ and $S\in \sinf$, $0\leq \dim_{\fs}^{\alpha}(S)\leq \Dim_{\fs}^{\alpha}(S)\leq1.$

\begin{theorem}
	For all $\alpha \in \tri(\Sigma)$ and $S\in \sinf$ let $c= \log(1/ \min_{a\in\Sigma}\alpha(a))$. Then,
	\[
		\dim^{\alpha}_{\fs}(S)\leq 1- \Divg(S||\alpha)/c
	\]
	and 
	\[
		\Dim_{\fs}^{\alpha}(S)\leq 1-\divg(S||\alpha)/c.
	\]
\end{theorem}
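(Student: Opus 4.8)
The plan is to prove each inequality by exhibiting, for every $s$ above the claimed bound, a finite-state gambler witnessing that $s$ lies in the relevant gale class, using $\dim_{\fs}^{\alpha}(S)=\inf\mathfrak{G}^{\alpha}(S)$ and $\Dim_{\fs}^{\alpha}(S)=\inf\mathfrak{G}_{\str}^{\alpha}(S)$. The bridge between the strong dichotomy theorem (which controls $\dga$) and the dimension quantity $\alpha^{\abs{w}}(w)^{1-s}\dga(w)$ is the elementary estimate $\alpha^{\abs{w}}(w)\geq 2^{-c\abs{w}}$, valid because every symbol has $\alpha$-probability at least $2^{-c}$; hence for $s<1$ we have $\alpha^{\abs{w}}(w)^{1-s}\geq 2^{-(1-s)c\abs{w}}$, converting exponential capital growth into growth of the dimension quantity. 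Both bounds are trivial when the divergence in question is $0$ (the right-hand side is then $1$ and we use $\dim_{\fs}^{\alpha}(S)\leq\Dim_{\fs}^{\alpha}(S)\leq 1$), so I assume positivity and take $s<1$ throughout.

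For the first inequality, fix $s>1-\Divg(S||\alpha)/c$ and choose $\gamma<1$ close enough to $1$ that $s>1-\gamma\Divg(S||\alpha)/c$. Since $\Divg(S||\alpha)>0$, part~1 of the strong dichotomy theorem supplies a finite-state gambler $G$ with $\dga(w)>2^{\gamma\Divg(S||\alpha)\abs{w}}$ for infinitely many prefixes $w\prefix S$. Along these prefixes,
\[
\alpha^{\abs{w}}(w)^{1-s}\dga(w)>2^{\left(\gamma\Divg(S||\alpha)-(1-s)c\right)\abs{w}}.
\]
The exponent coefficient is positive by the choice of $\gamma$ and $\abs{w}\to\infty$ along the subsequence, so $\limsup_{w\to S}\alpha^{\abs{w}}(w)^{1-s}\dga(w)=\infty$, i.e. $s\in\mathfrak{G}^{\alpha}(S)$. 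As $s$ was arbitrary above $1-\Divg(S||\alpha)/c$, the first inequality follows.

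The second inequality is the dual statement, and for it I would first establish a \emph{liminf} version of part~1: if $\divg(S||\alpha)>0$, then for every $\gamma<1$ there is a finite-state gambler $G$ with $\dga(w)>2^{\gamma\divg(S||\alpha)\abs{w}}$ for \emph{all} sufficiently long $w\prefix S$. Granting this, the argument above applies verbatim with $\limsup$ replaced by $\liminf$ and $\Divg$ by $\divg$, yielding $s\in\mathfrak{G}_{\str}^{\alpha}(S)$ for every $s>1-\divg(S||\alpha)/c$. To prove the liminf version I would reuse the block-gambler construction from part~1: fix $\ell$ with $\divg_{\ell}(S||\alpha)/\ell$ near $\divg(S||\alpha)$, hard-code a single $\uppi_{0}\in\tri^{+}(\Sigma^{\ell})$ as the conditional betting function, and read off from that proof that
\[
\log\big(\alpha^{\abs{w}}(w)^{1-s}\dga(w)\big)=n\,f(\pisnl)+O(1),
\]
where $n$ is the number of length-$\ell$ blocks in $w$ and $f(\beta)=D(\beta||\al)-D(\beta||\uppi_{0})+(1-s)\sum_{u}\beta(u)\log\al(u)$.

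The crux, and the step I expect to be the main obstacle, is the choice of $\uppi_{0}$. In part~1 one takes $\uppi_{0}$ near a \emph{single} limit point of the empirical measures $\pisnl$ realizing the $\limsup$, which only forces $f(\pisnl)>0$ along a subsequence. For the liminf we instead need $\min_{\beta\in K}f(\beta)>0$, where $K$ is the (compact, and for running averages connected) set of limit points of $\pisnl$; the difficulty is that these measures may fail to converge and may approach distinct far-from-$\al$ distributions in incompatible directions, so no $\uppi_{0}$ aligned with one of them profits on all of them. I would resolve this by taking $\uppi_{0}$ to be the information projection of $\al$ onto $\mathrm{conv}\,K$: since $f$ is affine in $\beta$ and concave in $\uppi_{0}$, Sion's minimax theorem gives $\max_{\uppi_{0}}\min_{\beta\in K}f(\beta)=\min_{\beta\in\mathrm{conv}\,K}\big[D(\beta||\al)+(1-s)\sum_{u}\beta(u)\log\al(u)\big]$, and the Pythagorean inequality for the projection together with $\sum_{u}\beta(u)\log\al(u)\geq-\ell c$ bounds this below by $\big(\min_{\mathrm{conv}\,K}D(\cdot||\al)\big)-(1-s)\ell c$. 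This is positive once $s$ exceeds $1$ minus that minimum over $\ell c$, and taking the supremum over $\ell$ gives the liminf capital bound. The genuinely delicate point is that this minimum over $\mathrm{conv}\,K$ must be shown to equal $\divg_{\ell}(S||\alpha)=\min_{K}D(\cdot||\al)$ rather than a strictly smaller value on the convex hull; this is exactly where I expect to need the connectedness of the limit set of cumulative empirical measures (and, if it fails for block gamblers, a refinement to context-dependent bets that track the local empirical distribution).
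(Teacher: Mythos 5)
Your handling of the first inequality is correct and is essentially the paper's own argument: the paper re-runs the block-gambler construction inline rather than citing part~1 of the strong dichotomy theorem, but the substance --- a gambler with $\dga(w)>2^{\gamma \Divg(S||\alpha)\abs{w}}$ infinitely often, combined with $\alpha^{\abs{w}}(w)^{1-s}\geq 2^{-(1-s)c\abs{w}}$ --- is the same, and your route is cleaner.

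For the second inequality you have correctly isolated the real difficulty (which the paper dismisses with ``the proof of the other case is similar''), but your proposed resolution fails. Your minimax computation is right: the best guaranteed exponential rate of an $\ell$-block gambler with a single hard-coded betting measure $\uppi_0$ is $\max_{\uppi_0}\min_{\beta\in K}\sum_{u}\beta(u)\log\frac{\uppi_0(u)}{\al(u)}=\min_{\beta\in\mathrm{conv}\,K}D(\beta||\al)$, where $K$ is the limit set of the $\pisnl$. The gap is the hope that connectedness of $K$ forces this to equal $\min_{K}D(\cdot||\al)$, i.e.\ $\divg_{\ell}(S||\alpha)$. It does not: a connected set can encircle $\al$ while staying KL-far from it. Concretely, take $\abs{\Sigma}=3$, $\ell=1$, $\alpha$ uniform, and build $S$ by steering the running empirical measure toward each vertex of a small triangle centered at $\alpha$ in ever-longer stretches; since successive cumulative averages differ by $O(1/n)$, the limit set $K$ is the (connected) boundary of that triangle, so $\divg_{1}(S||\alpha)>0$ yet $\alpha\in\mathrm{conv}\,K$. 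Then for every fixed $\uppi_0$, $\min_{\beta\in K}\sum_{a}\beta(a)\log\frac{\uppi_0(a)}{\alpha(a)}\leq\sum_{a}\alpha(a)\log\frac{\uppi_0(a)}{\alpha(a)}=-D(\alpha||\uppi_0)\leq 0$: no single hard-coded measure wins at any positive exponential rate along \emph{all} long prefixes, so your ``liminf version of part~1'' cannot be obtained this way.

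The standard repair is a different device: take a finite $\epsilon$-net $\{\uppi_1,\dots,\uppi_N\}$ of $\tri(\Sigma^{\ell})$; at every stage some $\uppi_i$ satisfies $D(\pisnl||\uppi_i)<\epsilon'$, so that block gambler's capital is at least $2^{n(D(\pisnl||\al)-\epsilon')-O(1)}$, and the average of the $N$ capital functions is within a factor $N$ of the best one at every stage. Since that average is not literally the capital of a single finite-state gambler, one must invoke the equivalence of multi-account and single-account finite-state gamblers for dimension, as in \cite{DLLM04,AHLM07}. To be fair, the paper's own sketch has the same hole --- its betting function $B_n$ depends on $n$, which no finite-state gambler can realize --- so your diagnosis of where the obstacle lies is exactly right; it is only the proposed fix via connectedness that does not go through.
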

\begin{proof}
Let $t<\Divg(S||\alpha)/c$, and let $s=1-t$. Fix $l$ such that $\Divsal/l>tc$, then for almost every $n$,
$D(\pisnl||\alpha^{(\ell)})> l t c$. Note that,  $\alpha^{|w|}(w)\ge
(1/2^{c})^{|w|}$ for every $w\in\sstar$.

Define the gambler $G$ be $G=(Q,\delta, s_0, B_n)$, where
$Q=\Sigma^{\leq\ell-1}$,
\begin{equation*}\begin{aligned}[c]
\delta(w,a)=\begin{cases}
    wa &\quad \text{if $\abs{wa}<\ell$} \\
    \lambda&\quad \text{if $\abs{wa}=\ell$}
\end{cases}
\end{aligned}
\end{equation*}
$ s_0=\lambda$,
 and
$ B_n(w)(a)= \pisnl(a|w)$, where $\pisnl(a | w) $ describes the conditional probability (induced by $\pisnl$) of occurrence of an
$a$ after $w\in Q$.

Let $u=a_0\cdots a_{\ell -1}$ be in $\Sigma^{\ell}$.
\[
    \frac{B_n(\lambda)(a_0)\cdots B_n(u[0..\ell-2])(a_{\ell-1})}{\alpha(a_0)\cdots\alpha(a_{\ell-1})}=\frac{\pisnl(u)}{\alpha^{(\ell)}(u)}.
\]

Then for $z\in\sstar$ with $z\sqsubseteq S$ and $|z|=ln$, we have

\begin{align*}
\alpha^{\abs{z}}(z)^{1-s}\dga(z) &= \alpha^{\abs{z}}(z)^{t}\dga(z) \\
&=\alpha^{\abs{z}}(z)^{t}\prod_{u\in\Sigma^{\ell}}\left(\frac{\pisnl(u)}{\alpha^{(\ell)}(u)}\right)^{n\pisnl(u)}
\end{align*}

Therefore,
\begin{align*}
\alpha^{\abs{z}}(z)^{t} \dga(z) &\ge \frac{1}{2^c}^{|z|t}2^{n D(\pisnl||\alpha^{(\ell)})}\\
&\ge 2^{-c |z|t+c|z|t}
\end{align*}

Since the number of states is fixed,  this implies $\dim^{\alpha}_{\fs}(S)\leq 1- \Divg(S||\alpha)/c$.

The proof of the other case is similar, where we use the fact that, for infinitely many $n$,
$D(\pisnl||\alpha^{(\ell)})> l t c$.
\end{proof}

    \section*{Acknowledgments}
    We thank students in Iowa State University's fall, 2017, advanced topics in computational randomness course for listening to a preliminary version of this research. We thank three anonymous reviewers of an earlier draft of this paper for useful comments and corrections.

    \bibliographystyle{plain}
    \bibliography{master}

\begin{thebibliography}{10}

\bibitem{agafonov1968normal}
Valerii~Nikolaevich Agafonov.
\newblock Normal sequences and finite automata.
\newblock In {\em Doklady Akademii Nauk}, volume 179, pages 255--256, 1968.

\bibitem{AHLM07}
Krishna~B. Athreya, John~M. Hitchcock, Jack~H. Lutz, and Elvira Mayordomo.
\newblock Effective strong dimension in algorithmic information and
  computational complexity.
\newblock {\em SIAM Journal on Computing}, 37(3):671--705, 2007.

\bibitem{BecHei13}
Ver{\'o}nica Becher and Pablo~Ariel Heiber.
\newblock Normal numbers and finite automata.
\newblock {\em Theoretical Computer Science}, 477:109--116, 2013.

\bibitem{BeHeSl13}
Ver{\'o}nica Becher, Pablo~Ariel Heiber, and Theodore~A. Slaman.
\newblock A polynomial-time algorithm for computing absolutely normal numbers.
\newblock {\em Information and Computation}, 232:1--9, 2013.

\bibitem{billingsley1960hausdorff}
Patrick Billingsley.
\newblock Hausdorff dimension in probability theory.
\newblock {\em Illinois Journal of Mathematics}, 4(2):187--209, 1960.

\bibitem{billingsley1995probability}
Patrick Billingsley.
\newblock {\em Probability and Measure}.
\newblock Wiley-Interscience, 1995.

\bibitem{emile1909probabilites}
{\'E}mile Borel.
\newblock Les probabilit{\'e}s d{\'e}nombrables et leurs applications
  arithm{\'e}tiques.
\newblock {\em Rendiconti del Circolo Matematico di Palermo (1884-1940)},
  27(1):247--271, 1909.

\bibitem{BoHiVi05}
Chris Bourke, John~M. Hitchcock, and N.~V. Vinodchandran.
\newblock Entropy rates and finite-state dimension.
\newblock {\em Theoretical Computer Science}, 349(3):392--406, 2005.

\bibitem{Buge12}
Yann Bugeaud.
\newblock {\em Distribution Modulo One and Diophantine Approximation}.
\newblock Cambridge University Press, 2012.

\bibitem{cajar1981billingsley}
Helmut Cajar.
\newblock {\em Billingsley Dimension in Probability Spaces}.
\newblock Springer-Verlag, 1981.

\bibitem{CopErd46}
Arthur~H. Copeland and Paul Erd{\"o}s.
\newblock Note on normal numbers.
\newblock {\em Bulletin of the American Mathematical Society}, 52(10):857--860,
  1946.

\bibitem{oCovTho06}
Thomas~R. Cover and Joy~A. Thomas.
\newblock {\em Elements of Information Theory}.
\newblock John Wiley \& Sons, Inc., 2006.

\bibitem{csiszar2004information}
Imre Csisz{\'a}r and Paul~C. Shields.
\newblock {\em Information Theory and Statistics: A Tutorial}, volume~1.
\newblock Now Publishers, Inc., 2004.

\bibitem{DLLM04}
Jack~J. Dai, James~I. Lathrop, Jack~H. Lutz, and Elvira Mayordomo.
\newblock Finite-state dimension.
\newblock {\em Theoretical Computer Science}, 310(1-3):1--33, 2004.

\bibitem{DajKra02}
Karma Dajani and Cor Kraaikamp.
\newblock {\em Ergodic Theory of Numbers}.
\newblock Cambridge University Press, 2002.

\bibitem{DoLuNa07}
David Doty, Jack~H. Lutz, and Satyadev Nandakumar.
\newblock Finite-state dimension and real arithmetic.
\newblock {\em Information and Computation}, 205(11):1640--1651, 2007.

\bibitem{falc14}
Kenneth Falconer.
\newblock {\em Fractal Geometry: Mathematical Foundations and Applications}.
\newblock Wiley, 2014.

\bibitem{Fede91}
Meir Feder.
\newblock Gambling using a finite state machine.
\newblock {\em IEEE Transactions on Information Theory}, 37(5):1459--1465,
  1991.

\bibitem{GuLuMo07}
Xiaoyang Gu, Jack~H. Lutz, and Philippe Moser.
\newblock Dimensions of {C}opeland--{E}rd{\"o}s sequences.
\newblock {\em Information and Computation}, 205(9):1317--1333, 2007.

\bibitem{haus19}
F.~Hausdorff.
\newblock Dimension und {\"a}u{\ss}eres ma{\ss}.
\newblock {\em Mathematische Annalen}, 79:157--179, 1919.

\bibitem{KamWei75}
Teturo Kamae and Benjamin Weiss.
\newblock Normal numbers and selection rules.
\newblock {\em Israel Journal of Mathematics}, 21(2-3):101--110, 1975.

\bibitem{knuth2014art}
Donald~E. Knuth.
\newblock {\em Art of Computer Programming, volume 2: Seminumerical
  Algorithms}.
\newblock Addison-Wesley Professional, 2014.

\bibitem{kullback1951information}
Solomon Kullback and Richard~A. Leibler.
\newblock On information and sufficiency.
\newblock {\em The Annals of Mathematical Statistics}, 22(1):79--86, 1951.

\bibitem{lutz03a}
Jack~H. Lutz.
\newblock Dimension in complexity classes.
\newblock {\em SIAM Journal on Computing}, 32(5):1236--1259, 2003.

\bibitem{jLutz03}
Jack~H. Lutz.
\newblock The dimensions of individual strings and sequences.
\newblock {\em Information and Computation}, 187(1):49--79, 2003.

\bibitem{lutz11}
Jack~H. Lutz.
\newblock A divergence formula for randomness and dimension.
\newblock {\em Theoretical Computer Science}, 412(1-2):166--177, 2011.

\bibitem{LutMay18}
Jack~H. Lutz and Elvira Mayordomo.
\newblock Computing absolutely normal numbers in nearly linear time.
\newblock {\em arXiv preprint arXiv:1611.05911}, 2016.

\bibitem{martin1966definition}
Per Martin-L{\"o}f.
\newblock The definition of random sequences.
\newblock {\em Information and Control}, 9(6):602--619, 1966.

\bibitem{MerRei06}
Wolfgang Merkle and Jan Reimann.
\newblock Selection functions that do not preserve normality.
\newblock {\em Theory of Computing Systems}, 39(5):685--697, 2006.

\bibitem{DBLP:journals/mst/Schnorr71}
Claus{-}Peter Schnorr.
\newblock A unified approach to the definition of random sequences.
\newblock {\em Mathematical Systems Theory}, 5(3):246--258, 1971.

\bibitem{SchSti72}
Claus-Peter Schnorr and Hermann Stimm.
\newblock Endliche {A}utomaten und {Z}ufallsfolgen.
\newblock {\em Acta Informatica}, 1(4):345--359, 1972.

\bibitem{shannon1948mathematical}
Claude~Elwood Shannon.
\newblock A mathematical theory of communication.
\newblock {\em Bell System Technical Journal}, 27(3):379--423, 1948.

\bibitem{shen2017automatic}
Alexander Shen.
\newblock Automatic {K}olmogorov complexity and normality revisited.
\newblock In {\em International Symposium on Fundamentals of Computation
  Theory}, pages 418--430. Springer, 2017.

\bibitem{sullivan1984entropy}
Dennis Sullivan.
\newblock Entropy, {H}ausdorff measures old and new, and limit sets of
  geometrically finite {K}leinian groups.
\newblock {\em Acta Mathematica}, 153(1):259--277, 1984.

\bibitem{tric1982}
Claude Tricot.
\newblock Two definitions of fractional dimension.
\newblock In {\em Mathematical Proceedings of the Cambridge Philosophical
  Society}, volume~91, pages 57--74. Cambridge University Press, 1982.

\bibitem{Wall49}
Donald~Dines Wall.
\newblock {\em Normal numbers}.
\newblock Ph.D. thesis, University of California, Berkeley, 1949.

\end{thebibliography}



\end{document}